\documentclass[letterpaper, 10 pt, conference]{ieeeconf}
\IEEEoverridecommandlockouts
\usepackage{cite}
\usepackage{amsmath,amssymb,amsfonts}
\allowdisplaybreaks
\usepackage{algorithm}
\usepackage{algorithmic}
\usepackage{graphicx}
\usepackage{textcomp}
\usepackage{booktabs}
\usepackage[table,dvipsnames]{xcolor}
\newenvironment{proofsketch}{\par\noindent\hspace{2em}{\itshape Proof sketch: }}{\hspace*{\fill}~\QED\par}
\newcommand{\runin}[1]{\par\medskip\noindent\textbf{#1}\enspace\ignorespaces}

\newtheorem{theorem}{Theorem}[section]

\newtheorem{proposition}[theorem]{Proposition}

\newtheorem{definition}[theorem]{Definition}
\newtheorem{remark}[theorem]{Remark}

\newtheorem{example}[theorem]{Example}
\newcommand{\R}{\mathbb{R}}
\newcommand{\N}{\mathbb{N}}
\title{\LARGE \bf Integer-Free Control Synthesis for Signal Temporal Logic via\\ Ranked Implicants}
\author{Arash Bahari Kordabad$^{1}$, Mohammad H. Mamduhi$^{2}$, and Sadegh Soudjani$^{1,2}$%
\thanks{$^{1}$Max Planck Institute for Software Systems,
Kaiserslautern, Germany. Corresponding author's email:
\texttt{arashbk@mpi-sws.org}}%
\thanks{$^{2}$School of Computer Science, University of Birmingham,
Birmingham, United Kingdom.}%
}
\begin{document}
\maketitle
\thispagestyle{empty}
\pagestyle{empty}
\begin{abstract}
This paper proposes an integer-free method for control synthesis of nonlinear
discrete-time systems under finite-horizon Signal Temporal Logic (STL) constraints.
Instead of introducing integer variables, we consider the
discrete choices of the specification in a ranked order and solve a
standard nonlinear program for each selected choice until a satisfying
trajectory is found. We first define \emph{implicants} as sets of timed
atomic constraints whose joint satisfaction is sufficient to satisfy
the specification. A reference trajectory obtained from a smooth
approximation of the STL robustness is used to rank these implicants.
Fixing one implicant turns the problem into a standard nonlinear
program with no auxiliary weights or bilinear terms. If the solution
does not satisfy the specification, the next implicant will be considered.
We prove soundness and completeness of our method relative to the nonlinear
programming solver. Numerical evaluation on nonlinear multi-agent systems shows that our
proposed method identifies satisfying trajectories where the exact
continuous reformulation fails, and moreover, it is faster than a
mixed-integer formulation of the same specification.
\end{abstract}

\section{Introduction}
\label{sec:introduction}

Signal Temporal Logic (STL) \cite{maler2004} is a
formal language for specifying spatio-temporal requirements of
dynamical systems. Its quantitative
semantics, a.k.a. the robustness function \cite{donze2010}, assigns a real
value to a trajectory, where a positive value implies satisfaction,
allowing STL requirements to be included in trajectory optimization
problems. The robustness function is built from nested $\min$ and
$\max$ operators and is generally nonsmooth and nonconvex. For linear
systems and predicates, STL constraints can be encoded exactly using
mixed-integer linear programming \cite{raman2014}, and more compact
encodings introduce binary variables only at the disjunctive choices
of the formula \cite{kurtz2022}. For nonlinear dynamics, however, the same
encodings lead to nonlinear mixed-integer problems, which can be
significantly harder to solve. Smooth approximations of the robustness function
\cite{pant2017} allow gradient-based solvers, scaled to
high-dimensional nonlinear systems by differential dynamic programming
\cite{kurtz2021ddp}, but they introduce approximation error. In
all of these gradient-based methods, the discrete choices of the specification stay inside
one nonconvex program. A local solver can therefore settle on
an infeasible choice~\cite{kurtz2021ddp}.

Optimization techniques based on the robustness function have also been extended to stochastic
settings~\cite{kordabad2024dr,kordabad2025ddr}.
Control barrier
function (CBF) techniques can also be used to enforce STL specifications
\cite{kordabad2024cbf}, reducing online computation,
but these methods require the barrier functions to be constructed in advance. More
recently, exact continuous reformulations rewrite the $\min$ and
$\max$ operations using continuous weights on simplices
\cite{han2025exact}, or rewrite arbitrary logical constraints through
their conjunctive normal form into max-min constraints
\cite{wehbeh2026logic}. Both remove the binary variables without
approximation error. However, we show that the simplex-weight reformulation
of~\cite{han2025exact} remains sensitive to initialization.

\emph{Contributions.}
First, we show that the exact simplex-weight reformulation in
\cite{han2025exact} is sensitive to initialization. Although
binary variables are removed, the discrete choices remain encoded in
the continuous simplex weights. We characterize this failure mechanism
and show why several natural initialization strategies do not resolve
it. Second, building on the disjunctive view of STL satisfaction
\cite{ho2022automaton,decarli2026stlgcs,ye2026zeroshot}, we formalize
implicants for arbitrary finite-horizon STL formulas, including nested
operators and \emph{until}. We prove that STL satisfaction is equivalent to
the satisfaction of at least one implicant and express the STL
robustness exactly as the maximum of the margins of these implicants.
This separates the discrete choices of the specification from the
continuous trajectory optimization. Third, we propose ranked implicant synthesis. One smooth
robustness problem provides a reference trajectory for ranking the
implicants, and each selected implicant is enforced through a smooth
nonlinear program (NLP) with no simplex weights or bilinear terms. We
prove soundness and completeness relative to the NLP solver and
demonstrate improved feasibility and speed in multi-agent problems.

The proposed search is related to relax-and-fix methods in integer
programming \cite{wolsey2020} and to Logic-Geometric Programming in
task and motion planning \cite{toussaint2015}. Closer to our
approach, \cite{ho2022automaton} and the graph-of-convex-sets planner for STL~\cite{decarli2026stlgcs}
exploit the disjunctive structure of restricted STL settings to guide
sampling-based and convex optimization, respectively. Closely
related, \cite{ye2026zeroshot} rewrites STL into a strengthened
disjunctive normal form and heuristically selects a single branch for
planning. In contrast, our implicants represent the specification
exactly and are \emph{ranked}, rather than selected. A single smooth
reference trajectory orders them by predicate margin, and the
resulting NLPs are tried in that order until a feasible one is found.
Thus, an inaccurate ranking affects efficiency but cannot discard a
feasible implicant, yielding completeness relative to the NLP solver.

\emph{Notation.} $\R$ and $\N$ denote the real numbers and the
positive integers, and $\N_0=\N\cup\{0\}$. Bold letters denote finite
sequences, $\mathbf x=[x_0,\dots,x_N]$. Calligraphic letters denote
sets, families of sets, and trees, $|C|$ is the cardinality of a set $C$, and
$\emptyset$ is the empty set. $\top$ and $\bot$ denote true and
false.

\emph{Outline.} Section~\ref{sec:preliminaries} introduces the STL
formulation and states the problem. Section~\ref{sec:exact} reviews
the exact continuous reformulation and characterizes its sensitivity
to initialization. Section~\ref{sec:method} introduces the implicants,
the proposed ranking and search method, and its guarantees.
Section~\ref{sec:experiments} presents the numerical evaluation, and
Section~\ref{sec:con} concludes the paper.
\section{Preliminaries and Problem Formulation}
\label{sec:preliminaries}
We consider the deterministic nonlinear discrete-time system
\begin{equation}\label{eq:dyn}
x_{k+1}=f(x_k,u_k),
\end{equation}
where $x_k\in\R^{n_x}$ is the state, $u_k\in\mathcal U\subseteq\R^{n_u}$
is the control input, $k\in\N_0$ is the time index, and
$f:\R^{n_x}\times\R^{n_u}\to\R^{n_x}$ is continuously differentiable.
The initial state $x_0$ is given. Over a finite horizon $N\in\N$, we
write $\mathbf x:=[x_0,\dots,x_N]$ and
$\mathbf u:=[u_0,\dots,u_{N-1}]$. We consider finite-horizon STL formulas generated by
\begin{equation*}
\varphi ::= \top
\mid \pi
\mid \neg\varphi
\mid \varphi\wedge\psi
\mid \varphi\,U_{[a,b]}\,\psi ,
\end{equation*}
where $\varphi$ and $\psi$ are STL formulas,
$\pi=\{\alpha(x)\ge0\}$ is a predicate with a continuously
differentiable function $\alpha:\R^{n_x}\to\R$,
and $0\le a\le b$ are integer time bounds. The symbols $\top$,
$\neg$, $\wedge$, and $U_{[a,b]}$ denote \emph{true}, \emph{negation},
\emph{conjunction}, and the bounded \emph{until} operator, respectively. The commonly used disjunction, eventually, and always operators are
defined by $\varphi\vee\psi
:=\neg(\neg\varphi\wedge\neg\psi)$, $\Diamond_{[a,b]}\varphi
:=\top\,U_{[a,b]}\varphi$ and $\Box_{[a,b]}\varphi
:=\neg\Diamond_{[a,b]}\neg\varphi$. Satisfaction of $\varphi$ by
$\mathbf x$ at time $k$, denoted $(\mathbf x,k)\models\varphi$, is
defined by
\begin{align*}
(\mathbf x,k)\models\top
&\quad\text{always},\\
(\mathbf x,k)\models\pi
&\Leftrightarrow  \alpha(x_k)\ge0,\\
(\mathbf x,k)\models\neg\varphi
&\Leftrightarrow (\mathbf x,k)\not\models\varphi,\\
(\mathbf x,k)\models\varphi\wedge\psi
&\Leftrightarrow  (\mathbf x,k)\models\varphi
\text{ and } (\mathbf x,k)\models\psi,\\
(\mathbf x,k)\models\varphi\vee\psi
&\Leftrightarrow  (\mathbf x,k)\models\varphi
\text{ or } (\mathbf x,k)\models\psi,\\
(\mathbf x,k)\models\Box_{[a,b]}\varphi
&\Leftrightarrow  \forall k'\!\in\!\{a,\dots,b\}\!:
(\mathbf x,k+k')\models\varphi,\\
(\mathbf x,k)\models\Diamond_{[a,b]}\varphi
&\Leftrightarrow  \exists k'\!\in\!\{a,\dots,b\}\!:
(\mathbf x,k+k')\models\varphi,\\
(\mathbf x,k)\models\varphi U_{[a,b]}\psi
&\Leftrightarrow \exists k'\!\in\!\{a,\dots,b\}\!:
(\mathbf x,k+k')\models\psi\\*
&\;\;\text{and }\forall k''\!\in\!\{0,\dots,k'\}\!:
(\mathbf x,k+k'')\models\varphi.
\end{align*}
We assume the horizon of $\varphi$ is at most $N$. Hence, satisfaction at
time $0$ depends only on $x_0,\dots,x_N$. A \emph{timed atom} is an inequality
$\alpha(x_k)\ge0$ for a predicate function $\alpha$ of $\varphi$ and a
time $k\in\{0,\dots,N\}$.

The STL robustness function \cite{donze2010} is defined recursively as
\begin{align*}
\rho^\top(\mathbf x,k)
&=+\infty,\\
\rho^{\pi}(\mathbf x,k)
&=\alpha(x_k),\\\rho^{\neg\varphi}(\mathbf x,k)
&=-\rho^\varphi(\mathbf x,k),\\
\rho^{\varphi\wedge\psi}(\mathbf x,k)
&=\min\big(\rho^{\varphi}(\mathbf x,k),\rho^{\psi}(\mathbf x,k)\big),\\
\rho^{\varphi\vee\psi}(\mathbf x,k)
&=\max\big(\rho^{\varphi}(\mathbf x,k),\rho^{\psi}(\mathbf x,k)\big),\\
\rho^{\Box_{[a,b]}\varphi}(\mathbf x,k)
&=\min_{k'\in\{a,\dots,b\}}\rho^{\varphi}(\mathbf x,k+k'),\\
\rho^{\Diamond_{[a,b]}\varphi}(\mathbf x,k)
&=\max_{k'\in\{a,\dots,b\}}\rho^{\varphi}(\mathbf x,k+k'),\\
\rho^{\varphi U_{[a,b]}\psi}(\mathbf x,k)
&=\max_{k'\in\{a,\dots,b\}}\min\Big(\rho^{\psi}(\mathbf x,k+k'),\\*
&\hphantom{{}=\max_{k'\in\{a,\dots,b\}}}
\min_{k''\in\{0,\dots,k'\}}\rho^{\varphi}(\mathbf x,k+k'')\Big),
\end{align*}
and satisfies $\rho^\varphi(\mathbf x,k)>0\Rightarrow
(\mathbf x,k)\models\varphi$. Since strict inequalities are not
convenient in numerical optimization, we enforce a fixed margin
$\delta>0$ and say that $\mathbf x$ satisfies $\varphi$ \emph{with
margin $\delta$} if $\rho^\varphi(\mathbf x,k)\ge\delta$. This implies
$(\mathbf x,k)\models\varphi$ and is the notion enforced in
\eqref{eq:prob}. The margin also settles the strictness of negated
predicates. The formula $\neg\pi$ requires $\alpha(x_k)<0$, and
$\rho^{\neg\pi}(\mathbf x,k)=-\alpha(x_k)\ge\delta$ gives
$\alpha(x_k)\le-\delta<0$.

\runin{A. Problem statement:}
Given a continuously differentiable cost
$J:\R^{n_x(N+1)}\times\R^{n_uN}\to\R$, we consider
\begin{equation}\label{eq:prob}
\min_{\mathbf{x},\mathbf{u}}\;J(\mathbf{x},\mathbf{u})
\quad\text{s.t.}\quad
\eqref{eq:dyn},\quad
u_k\in\mathcal U,\quad
\rho^\varphi(\mathbf{x},0)\ge\delta.
\end{equation}
The main difficulty comes from the STL robustness constraint, since
$\rho^\varphi$ contains nested $\min$ and $\max$ operators. The $\max$
operators associated with $\vee$, $\Diamond$, and $U$ represent
choices, such as which disjunct is satisfied or at which time a
predicate is satisfied. The $\min$ operators associated with $\wedge$,
$\Box$, and $U$ require all corresponding conditions to hold. Both
make the robustness function nonsmooth. A standard approach represents the choices using integer variables and
solves the resulting mixed-integer problem by
branch-and-bound~\cite{nemhauser1988}, which for nonlinear dynamics
becomes a nonlinear mixed-integer problem~\cite{burer2012non}. The
exact reformulation of Section~\ref{sec:exact} replaces the discrete
choices with continuous simplex weights, and smooth robustness methods
replace the nonsmooth $\min$ and $\max$ operators with smooth
approximations. In this work, we use a smooth problem only to rank the
possible choices, then fix one choice at a time and solve the
resulting NLP.

\section{Exact Continuous Reformulation and Its Hidden Decision}
\label{sec:exact}
This section reviews the exact reformulation of
\cite{han2025exact} and explains its sensitivity to initialization. The recursive definition of $\rho^\varphi$ can be represented by a
\emph{robustness tree} $\mathcal T^\varphi$~\cite{takayama2025stlccp}. Each leaf $v$ is a predicate function
evaluated at a specific time, with value
$\rho_v(\mathbf x)=\alpha_v(x_{k_v})$, and each internal node $v$ is a
$\min$ or $\max$ operator over its children
$v_1,\dots,v_m$, with value
$\rho_v(\mathbf x)=\min_j\rho_{v_j}(\mathbf x)$ or
$\rho_v(\mathbf x)=\max_j\rho_{v_j}(\mathbf x)$, respectively. The
value at the root $r$ is the robustness
$\rho_r(\mathbf x)=\rho^\varphi(\mathbf x,0)$. The exact reformulation introduces one
auxiliary scalar $\hat\rho_v$ per node and, \emph{only at $\max$
nodes}, auxiliary weights on the simplex
$\Delta_m:=\{\lambda\in\R^m_{\ge0}:\sum_j\lambda_j=1\}$, with the
constraints
\begin{subequations}
\begin{align}
\text{leaf }v:\quad & \hat\rho_v=\alpha_v(x_{k_v}),\notag\\*
\text{$\min$ node }v:\quad & \hat\rho_v\le\hat\rho_{v_j},
\quad j=1,\dots,m,
\label{eq:minnode}\\*
\text{$\max$ node }v:\quad &
\hat\rho_v\le\textstyle\sum_{j=1}^{m}\lambda^v_j\,\hat\rho_{v_j},
\qquad \lambda^v\in\Delta_m,
\label{eq:maxnode}\\*
\text{root }r:\quad & \hat\rho_r\ge \delta.\notag
\end{align}    
\end{subequations}
No weights are needed at $\min$ nodes, because requiring a common lower bound
for all children is already exact, which reflects that a conjunction
involves no discrete choice.

\begin{proposition}\label{prop:exact}
A trajectory $\mathbf x$ satisfies $\rho^\varphi(\mathbf x,0)\ge\delta$ if
and only if some $\{\hat\rho_v\}$ and $\{\lambda^v\}$ satisfy
all the constraints above, including the leaf equalities and
$\hat\rho_r\ge\delta$.
\end{proposition}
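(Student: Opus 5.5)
The plan is to prove the two directions separately, each by induction over the robustness tree $\mathcal T^\varphi$. The key intermediate claim is a node-wise invariant: for any feasible assignment, $\hat\rho_v\le\rho_v(\mathbf x)$ at every node $v$. Conversely, the true values together with suitable weights are always feasible.

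\emph{Sufficiency (constraints $\Rightarrow$ $\rho^\varphi(\mathbf x,0)\ge\delta$).} Fix any $\{\hat\rho_v\},\{\lambda^v\}$ satisfying all constraints. I show $\hat\rho_v\le\rho_v(\mathbf x)$ by structural induction from the leaves upward.
\begin{itemize}
\item At a leaf, the equality gives $\hat\rho_v=\alpha_v(x_{k_v})=\rho_v(\mathbf x)$.
\item At a $\min$ node, \eqref{eq:minnode} and the induction hypothesis give $\hat\rho_v\le\hat\rho_{v_j}\le\rho_{v_j}(\mathbf x)$ for all $j$. Hence $\hat\rho_v\le\min_j\rho_{v_j}(\mathbf x)=\rho_v(\mathbf x)$.
\item At a $\max$ node, \eqref{eq:maxnode} gives
\[
\hat\rho_v\le\sum_j\lambda^v_j\hat\rho_{v_j}\le\sum_j\lambda^v_j\rho_{v_j}(\mathbf x)\le\max_j\rho_{v_j}(\mathbf x)=\rho_v(\mathbf x).
\]
The middle inequality uses $\lambda^v_j\ge0$. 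The last uses that a convex combination never exceeds the maximum.
\end{itemize}
At the root this yields $\delta\le\hat\rho_r\le\rho_r(\mathbf x)=\rho^\varphi(\mathbf x,0)$.

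\emph{Necessity ($\rho^\varphi(\mathbf x,0)\ge\delta$ $\Rightarrow$ feasible constraints).} Set $\hat\rho_v:=\rho_v(\mathbf x)$ at every node. At each $\max$ node, pick an index $j^\star\in\arg\max_j\rho_{v_j}(\mathbf x)$ and let $\lambda^v$ be the vertex $e_{j^\star}\in\Delta_m$. Then:
\begin{itemize}
\item the leaf equalities hold by definition;
\item the $\min$ constraints hold because a minimum is at most each of its arguments;
\item the $\max$ constraints hold with equality, $\hat\rho_v=\rho_{v_{j^\star}}(\mathbf x)=\sum_j\lambda^v_j\hat\rho_{v_j}$;
\item the root constraint is exactly the hypothesis.
\end{itemize}

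The main obstacle is the constant $\top$, whose robustness is $+\infty$. Such nodes either must not appear as leaves of $\mathcal T^\varphi$ or must be handled by convention. I would first try simplifying them away before building the tree. In a $\min$ node, a $\top$ child imposes no constraint and can be dropped. A $\max$ node with a $\top$ child equals $+\infty$ and can be replaced by $\top$. The until operator reduces the same way, since $\Diamond$ is $\top\,U$ and the $\top$ terms vanish from the inner $\min$. After this simplification every remaining leaf is a finite predicate value, or the whole formula is $\top$ and the statement is trivial. Negations are pushed to the predicates via $\rho^{\neg\varphi}=-\rho^\varphi$ and De Morgan, so that leaves are $\pm\alpha$ evaluated at fixed times. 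With these in place, both inductions above are routine.
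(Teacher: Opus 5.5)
Your proof is correct and is essentially the paper's argument. The paper's sketch uses the same bound $\hat\rho_v\le\sum_j\lambda^v_j\hat\rho_{v_j}\le\max_j\hat\rho_{v_j}$ propagated up the tree for sufficiency, and weights supported only on maximizing children (your vertex $e_{j^\star}$) with exact node values for necessity, leaving the $\min$ nodes and the induction as immediate, which you spell out; your extra treatment of $\top$ is care the paper sidesteps by defining the robustness tree to have only predicate leaves (an explicit $\neg\top$, which you do not mention, is a degenerate case handled symmetrically).
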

\begin{proofsketch}
At a $\max$ node, $\hat\rho_v\le\sum_j\lambda^v_j\hat\rho_{v_j}\le
\max_j\hat\rho_{v_j}$ for any $\lambda^v\in\Delta_m$, with equality attained by assigning positive weight only to
maximizing children.
The $\min$ nodes and the induction along the tree are immediate. See
\cite[Lemmas~2--4]{han2025exact} for the full argument.
\end{proofsketch}
The reformulation contains no binary variables and has the same
feasible set as \eqref{eq:prob}, but introduces the bilinear terms
$\lambda^v_j\hat\rho_{v_j}$ in \eqref{eq:maxnode}. As shown in the
proof of Proposition~\ref{prop:exact}, the weights select maximizing
children and therefore represent the discrete choices associated with
the $\max$ nodes. Thus, removing the binary variables does not remove
these choices from the continuous optimization problem.
\runin{A. An illustrative example of the initialization issue:}\label{sec:diagnosis}
\begin{example}\label{ex:gamble}
The sensitivity to initialization can be seen in one dimension. Let $x_1=u$, with
$u\in[-1,1]$, $N=1$, and cost $J=u^2$. Consider
$\varphi=\Diamond_{[1,1]}(\pi_1\vee\pi_2)$, where
$\alpha_1(x)=\tfrac14-(x-2)^2$ and
$\alpha_2(x)=\tfrac14-(x+1)^2$.
The first predicate requires $x_1\in[1.5,2.5]$ and is therefore
unreachable, whereas the second requires $x_1\in[-1.5,-0.5]$ and is
reachable, with optimal solution $u^\star=-\tfrac12$. We set $\delta=0$ for the sake of clarity here.  The exact reformulation is
\begin{equation*}
\min_{u\in[-1,1],\,\lambda\in\Delta_2} u^2
\quad\text{s.t.}\quad
\lambda_1\alpha_1(u)+\lambda_2\alpha_2(u)\ge0 .
\end{equation*}
Although this reformulation is exact, its continuous formulation is
nonconvex and can be sensitive to initialization. Indeed,
$\alpha_1(u)-\alpha_2(u)=6u-3$, and for $u>\tfrac12$ the first
predicate has the larger margin, even though it is unreachable.
Consequently, an initialization in this region can drive the local
solver toward the first simplex vertex and the point $u=1$. At this
point, however, $\alpha_1(1)=-\tfrac34<0$, and the corresponding
branch cannot satisfy the specification. In contrast, the feasible
solution $(u,\lambda)=(-\tfrac12,(0,1))$ exists and is optimal.
\end{example}

Thus, the exact reformulation preserves feasibility and the
global optimal value, but a local numerical method may favor an
incorrect disjunct and fail to reach the feasible one.
\runin{B. Alternative approaches to fix the initialization issue:}\label{sec:repairs}
Example~\ref{ex:gamble} additionally shows why several natural attempts to
improve the initialization of the weights may still fail.

\emph{(a) Slack continuation.}
One may relax the constraint as
$\lambda^\top\alpha(u)\ge-s$ and gradually decrease $s$ to zero.
For a large $s$, there is little incentive for the weights to favor
either goal. At $\lambda=(\tfrac12,\tfrac12)$, the largest averaged
margin is $\max_u \tfrac12(\alpha_1(u)+\alpha_2(u))=-2$, attained at
$u=\tfrac12$. Hence, reducing $s$ below $2$ requires the weights to
move toward one goal. At $u=\tfrac12$, however,
$\alpha_1(u)=\alpha_2(u)=-2$, the local information does not favor
either choice, and a small change toward $u>\tfrac12$ favors the
unreachable first goal. Hence, slack continuation alone does not
resolve the choice between the two goals, and the solver may still
end at the unreachable one.

\emph{(b) Alternating optimization.}
Another possibility is to optimize $\lambda$ and $u$ in turn. For a
fixed $u$, the best weight is placed on the predicate with the larger
margin. Hence, if $u^0>\tfrac12$, the first weight update selects goal
$1$. Optimizing $u$ with this choice moves the solution toward $u=1$,
where $\alpha_1(1)>\alpha_2(1)$, and the next weight update again
selects goal $1$. The iterations remain at the unreachable goal even
though goal $2$ is feasible.

\emph{(c) Warm-starting the weights.}
Initializing the weights at the correct choice does not solve
the problem. For $\lambda^0=(0,1)$ and $u^0=0.9$, we have
$\alpha_1(u^0)-\alpha_2(u^0)=2.4>0$. Since $\lambda$ is a
decision variable, a local solver is free to move the weights toward
goal $1$. A good initialization of $\lambda$ does not ensure that the
same choice is maintained during the optimization.

In all three cases, the choice between the goals remains part of the
NLP and depends on the local trajectory. The next section fixes one
implicant while its NLP is solved.
\section{STL Synthesis via Ranked Implicants}
\label{sec:method}

In this section, we make the discrete choices in the STL specification explicit.
Expanding the temporal operators over the finite horizon gives the
sets of timed atomic constraints that can satisfy the specification,
which we call implicants. A smooth surrogate trajectory ranks the implicants.

\runin{A. Implicants via unrolling:}
We first unroll the formula into a propositional formula, as in
mixed-integer encodings of STL~\cite{raman2014} and bounded model
checking~\cite{biere1999}, and push negations to the predicates.
A negated timed atom $\neg(\alpha(x_k)\ge0)$ is replaced by the timed
atom $-\alpha(x_k)\ge0$. The resulting formula is negation-free and
hence in \emph{positive normal form}~\cite{sadraddini2015}. The
replacement leaves the robustness unchanged and changes Boolean
satisfaction only on the boundary $\alpha(x_k)=0$, excluded by the
margin (Remark~\ref{rem:negation}).
\begin{definition}[Unrolling]\label{def:ground}
For an STL formula $\varphi$ and time $k$, the unrolled formula
$G_k(\varphi)$ is defined recursively as
\begin{align*}
G_k(\top)&=\top,\\
G_k(\neg\varphi)&=\neg G_k(\varphi),\\
G_k(\pi)
&=\bigl(\alpha(x_k)\ge0\bigr),\\
G_k(\varphi\wedge\psi)
&=G_k(\varphi)\wedge G_k(\psi),\\
G_k(\varphi\vee\psi)
&=G_k(\varphi)\vee G_k(\psi),\\
G_k(\Box_{[a,b]}\varphi)
&=\textstyle\bigwedge_{k'\in\{a,\dots,b\}}G_{k+k'}(\varphi),\\
G_k(\Diamond_{[a,b]}\varphi)
&=\textstyle\bigvee_{k'\in\{a,\dots,b\}}G_{k+k'}(\varphi),\\
G_k(\varphi U_{[a,b]}\psi)
&=\textstyle\bigvee_{k'\in\{a,\dots,b\}}
\Big(
G_{k+k'}(\psi)
\wedge\\*
&\hphantom{{}=\textstyle\bigvee_{k'\in\{a,\dots,b\}}\Big(}
\textstyle\bigwedge_{k''\in\{0,\dots,k'\}}G_{k+k''}(\varphi)
\Big).
\end{align*}
Here, $G_k(\varphi)$ is a propositional formula over timed atoms, and
by the STL semantics $(\mathbf{x},k)\models\varphi$ if and only if
$G_k(\varphi)$ is satisfied by $\mathbf{x}$. We then push negations to
the atoms by De~Morgan's laws, replace every $\neg(\alpha(x_k)\ge0)$
by $-\alpha(x_k)\ge0$, and let $G_k(\varphi)$ denote the resulting
negation-free form.
\end{definition}

\begin{definition}[Implicants]\label{def:implicant}
For a timed atom $c$, define $\mathcal W(c)=\{\{c\}\}$, the family whose only implicant is $\{c\}$. Moreover,
$\mathcal W(\top)=\{\emptyset\}$, the family whose only implicant is the empty implicant, which is always
satisfied, and $\mathcal W(\bot)=\emptyset$, no implicant. For two unrolled
formulas $H_1$ and $H_2$, define
\begin{align*}
\mathcal W(H_1\wedge H_2)
&=
\bigl\{C_1\cup C_2:
C_1\in\mathcal W(H_1),\
C_2\in\mathcal W(H_2)\bigr\},\\*
\mathcal W(H_1\vee H_2)
&=
\mathcal W(H_1)\cup\mathcal W(H_2).
\end{align*}
Then $\mathcal W_k(\varphi):=\mathcal W(G_k(\varphi))$.
Thus, an implicant $C\in\mathcal W_k(\varphi)$ is a set of timed atomic
constraints whose joint satisfaction is sufficient to satisfy
$\varphi$ at time $k$. It is an implicant in the sense of propositional
logic~\cite{ferrere2015}, and it corresponds to a disjunctive branch
in~\cite{ye2026zeroshot,decarli2026stlgcs}.
\end{definition}

\begin{remark}[Prime implicants]\label{rem:minimal}
An implicant that strictly contains another is redundant. Indeed,
if $C\subset C'$ are both implicants, any trajectory satisfying all
atoms of $C'$ also satisfies all atoms of $C$, and imposing the atoms
of $C$ gives a larger feasible set and no larger optimal cost. Hence,
only the subset-minimal implicants, the prime implicants, matter.
\end{remark}

\begin{remark}[Negated predicates and the margin]\label{rem:negation}
Replacing $\neg(\alpha(x_k)\ge0)$ by $-\alpha(x_k)\ge0$ turns the
strict inequality $\alpha(x_k)<0$ into a non-strict one, and the two
differ only on the boundary $\alpha(x_k)=0$. This has no effect on the
method. The robustness is unchanged, since $\rho^{\neg\pi}=-\alpha$,
and every atom of an implicant is enforced with margin $\delta>0$ in
\eqref{eq:committed}. Hence, a returned trajectory satisfies
$-\alpha(x_k)\ge\delta$ and $\alpha(x_k)<0$ for every negated
atom of the selected implicant. Unlike~\cite{sadraddini2015}, a negated predicate adds no
secondary signal, only timed atoms.
\end{remark}

\begin{example}
\label{ex:families}
In this example, we consider three predicates $\pi_A$, $\pi_B$ and $\pi_C$ with predicate functions $\alpha_A$, $\alpha_B$ and $\alpha_C$, respectively. Write $a_k$ for the timed atom $\alpha_A(x_k)\ge0$, and similarly
$b_k$ and $c_k$.
\newline
\textnormal{(i)} For $\Diamond_{[1,3]}\pi_A$, unrolling gives
$G_0=a_1\vee a_2\vee a_3$. Hence, the implicants are
$\{a_1\}$, $\{a_2\}$, and $\{a_3\}$, corresponding to satisfying
$\pi_A$ at time $1$, $2$, or $3$.
\newline
\textnormal{(ii)} For
$(\pi_A\vee\pi_B)\wedge\Diamond_{[1,2]}\pi_C$, unrolling gives
$G_0=(a_0\vee b_0)\wedge(c_1\vee c_2)$. An implicant selects one atom
from each disjunction, giving $\{a_0,c_1\}$, $\{a_0,c_2\}$,
$\{b_0,c_1\}$, and $\{b_0,c_2\}$. Thus, an implicant can represent both
a logical choice and a satisfaction time.
\newline
\textnormal{(iii)} For the nested formula
$\Box_{[0,1]}\Diamond_{[0,2]}\pi_A$, unrolling gives
$G_0=(a_0\vee a_1\vee a_2)\wedge(a_1\vee a_2\vee a_3)$. Choosing one
atom from each parenthesis gives $3\times3=9$ selections. Since
implicants are sets, choosing $a_1$ from both parentheses gives
$\{a_1\}$, and $\{a_1,a_2\}$ is removed by Remark~\ref{rem:minimal}
because it contains $\{a_1\}$. Only $\{a_1\}$, $\{a_2\}$, and
$\{a_0,a_3\}$ remain. Indeed, $a_1$ or $a_2$ alone satisfies both
parentheses, and if neither holds, the first parenthesis requires
$a_0$ and the second requires $a_3$.
\newline
\textnormal{(iv)} For $\neg(\pi_A\,U_{[1,2]}\,\pi_B)$, unrolling gives
$G_0=\neg\big[(b_1\wedge a_0\wedge a_1)\vee(b_2\wedge a_0\wedge a_1\wedge a_2)\big]$.
Pushing the negation to the atoms yields
$(\bar b_1\vee\bar a_0\vee\bar a_1)\wedge(\bar b_2\vee\bar a_0\vee\bar a_1\vee\bar a_2)$,
where $\bar a_k$ denotes the atom $-\alpha_A(x_k)\ge0$ and similarly
$\bar b_k$. The prime implicants of this formula are $\{\bar a_0\}$,
$\{\bar a_1\}$, $\{\bar b_1,\bar b_2\}$ and $\{\bar b_1,\bar a_2\}$.
\end{example}

\begin{theorem}[Equivalence]\label{thm:equiv}
For every finite-horizon STL formula $\varphi$ and every trajectory
$\mathbf{x}$,
\begin{equation}\label{eq:maxmin}
\rho^\varphi(\mathbf{x},0)
=
\max_{C\in\mathcal W_0(\varphi)}
\min_{c\in C}\alpha_c(x_{k_c}),
\end{equation}
with $\min\emptyset=+\infty$ and $\max\emptyset=-\infty$.
Consequently, for every $\delta\ge0$, $\rho^\varphi(\mathbf x,0)\ge\delta$
if and only if there exists $C\in\mathcal W_0(\varphi)$ with
$\alpha_c(x_{k_c})\ge\delta$ for all $c\in C$. The trajectories
satisfying $\varphi$ with margin $\delta>0$ are exactly the union over
the implicants of the sets defined by their atoms with margin $\delta$.
\end{theorem}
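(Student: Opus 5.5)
The plan is to prove \eqref{eq:maxmin} by structural induction, not on $\varphi$ itself but on the negation-free unrolled formula $G_0(\varphi)$ of Definition~\ref{def:ground}. This needs a robustness value for propositional formulas over timed atoms. For such a formula $H$ we set
\begin{align*}
r(\top)&=+\infty, \qquad r(\bot)=-\infty, \qquad r(c)=\alpha_c(x_{k_c}),\\
r(H_1\wedge H_2)&=\min(r(H_1),r(H_2)), \qquad r(H_1\vee H_2)=\max(r(H_1),r(H_2)).
\end{align*}
The proof then splits into two claims: (i) $\rho^\varphi(\mathbf x,k)=r(G_k(\varphi))$, and (ii) $r(H)=\max_{C\in\mathcal W(H)}\min_{c\in C}\alpha_c(x_{k_c})$ for every negation-free $H$. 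Together they give \eqref{eq:maxmin} at $k=0$.

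For (i), I would first prove the identity for the unrolled formula before negations are pushed inward. In that version negation is evaluated by $r(\neg H)=-r(H)$. The proof is a structural induction on $\varphi$ that compares each line of Definition~\ref{def:ground} with the corresponding line of the robustness definition. The $\Box$, $\Diamond$ and $U$ cases are immediate, because both sides are the same finite nested $\min$/$\max$ over the same index sets $\{a,\dots,b\}$ and $\{0,\dots,k'\}$. The finite horizon ensures every index $k+k'$ lies in $\{0,\dots,N\}$.

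Next I would show that pushing negations inward preserves $r$. De~Morgan's laws hold for $r$ because $-\min(p,q)=\max(-p,-q)$ and $-\max(p,q)=\min(-p,-q)$. Double negation cancels. The replacement $\neg(\alpha(x_k)\ge0)\mapsto(-\alpha(x_k)\ge0)$ gives value $-\alpha(x_k)$ on both sides, which is the observation in Remark~\ref{rem:negation}. Negated $\top$ becomes $\bot$, with $r=-\infty$, matching $\rho^{\neg\top}=-\infty$.

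For (ii), I would induct on negation-free $H$. The base cases follow directly from Definition~\ref{def:implicant}: $\mathcal W(c)=\{\{c\}\}$, $\mathcal W(\top)=\{\emptyset\}$ gives $\min\emptyset=+\infty$, and $\mathcal W(\bot)=\emptyset$ gives $\max\emptyset=-\infty$. The disjunction case holds because a max over a union of index families equals the max of the two maxes. The conjunction case is the main algebraic step. Write $m(C)=\min_{c\in C}\alpha_c(x_{k_c})$. Then $m(C_1\cup C_2)=\min(m(C_1),m(C_2))$, so
\begin{equation*}
\max_{C_1,C_2}\min\bigl(m(C_1),m(C_2)\bigr)=\min\Bigl(\max_{C_1}m(C_1),\max_{C_2}m(C_2)\Bigr).
\end{equation*}
This is the finite distributivity of $\min$ over $\max$ in the totally ordered extended reals: choose maximizers separately for the $\le$ direction, and use monotonicity of $\min$ for the $\ge$ direction. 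Some care is needed when a family is empty. In that case both sides equal $-\infty$, and the product family is also empty.

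The consequences follow directly from \eqref{eq:maxmin}. A finite maximum is at least $\delta$ if and only if some term is at least $\delta$, and $m(C)\ge\delta$ means every atom of $C$ satisfies $\alpha_c(x_{k_c})\ge\delta$; this gives the union description. The main obstacle I expect is bookkeeping in (i), not (ii). It lies in the negation-pushing step in the presence of $\top$/$\bot$ and infinite values, and in making sure that the negation inside $\Box=\neg\Diamond\neg$ and inside nested until is handled consistently. I would address this by treating the pre-push unrolling as the intermediate object and proving the push step as a separate lemma on propositional formulas.
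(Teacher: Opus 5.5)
Your proposal is correct and follows the paper's proof essentially step for step. Robustness is preserved by unrolling and by pushing negations to the atoms (via $-\min(a,b)=\max(-a,-b)$ and $\rho^{\neg\pi}=-\alpha$), and the max--min identity for negation-free formulas follows by induction, using distributivity of $\min$ over $\max$ at conjunctions and the union of families at disjunctions. Your extra care with $\top$, $\bot$, and empty families only fills in details the paper leaves implicit.

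One small slip: in the conjunction step your direction labels are swapped. Choosing separate maximizers $C_1^\star,C_2^\star$ gives the $\ge$ direction, since the max over pairs is at least $\min(\max_{C_1}m(C_1),\max_{C_2}m(C_2))$. Monotonicity of $\min$ gives the $\le$ direction.
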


\begin{proof}
Unrolling preserves robustness, since each rule of
Definition~\ref{def:ground} mirrors the recursive definition of $\rho$
with $\wedge$ and $\vee$ evaluated as $\min$ and $\max$. Pushing
negations to the atoms preserves it, since $-\min(a,b)=\max(-a,-b)$,
and replacing $\neg(\alpha\ge0)$ by $-\alpha\ge0$ preserves it, since
$\rho^{\neg\pi}=-\alpha$. For a negation-free propositional formula
$G$, $\rho(G)=\max_{C\in\mathcal W(G)}\min_{c\in C}\alpha_c$ follows by
induction. For $H_1\wedge H_2$,
$\min(\max_{C_1}\min_{C_1},\max_{C_2}\min_{C_2})=\max_{C_1,C_2}\min_{C_1\cup C_2}$,
and for $H_1\vee H_2$ the maximum runs over the union of the two
families. The remaining statements follow from \eqref{eq:maxmin}.
\end{proof}

The same choices appear implicitly in mixed-integer STL encodings
\cite{raman2014,kurtz2022} and in the simplex weights of
Section~\ref{sec:exact}. The implicant representation separates the discrete choices
from the continuous optimization, without binary variables or
auxiliary simplex weights.

\runin{B. Reference trajectory:}
To determine which implicants should be considered first, we use a
smooth approximation of the STL robustness. We recall the standard
smooth approximations of $\min$ and $\max$~\cite{gilpin2021}. For
$n\in\N$, $z\in\R^n$ and $\beta>0$:
\begin{equation}\label{eq:lse}
\begin{aligned}
\widetilde{\max}_\beta(z)
&:=
\frac{1}{\beta}\log\Big(\sum_{j=1}^{n}e^{\beta z_j}\Big),\\
\widetilde{\min}_\beta(z)
&:=
-\widetilde{\max}_\beta(-z).
\end{aligned}
\end{equation}
As $\beta\to\infty$, these approximations converge to $\max_j z_j$
and $\min_j z_j$, respectively. Let
$\tilde\rho^\varphi_\beta(\mathbf{x})$ denote the smooth approximation
of $\rho^\varphi(\mathbf{x},0)$ obtained by recursively replacing each
$\max$ and $\min$ operator by \eqref{eq:lse}, which converges to
$\rho^\varphi(\mathbf{x},0)$ as $\beta\to\infty$. The reference trajectory $\mathbf{x}^r$ is
obtained by solving
\begin{equation}\label{eq:smoothprob}
\max_{\mathbf{x},\mathbf{u}}\;
\tilde\rho^\varphi_\beta(\mathbf{x})
-\gamma J(\mathbf{x},\mathbf{u})
\quad
\text{s.t.}\quad
\eqref{eq:dyn},\;
u_k\in\mathcal{U},
\end{equation}
where $\gamma>0$ is a constant that weights the two terms. A
small $\gamma$ favors the STL robustness over the cost, and a large $\gamma$ the cost over the robustness function. The trajectory $\mathbf{x}^r$ is used only
to rank the implicants and to initialize the induced programs, and it is neither a tracking reference, nor
required to satisfy the STL
specification. Feasibility of \eqref{eq:smoothprob} is therefore
independent of STL satisfaction. It only requires an admissible input
sequence for which the dynamics are well defined over the finite
horizon.

\runin{C. Ranking:}
For each implicant $C$, we define the score
\begin{equation}\label{eq:score}
S(C):=\min_{c\in C}\alpha_c(x^r_{k_c}),
\end{equation}
which is the smallest predicate margin among the atoms of $C$ along
$\mathbf{x}^r$ and measures how close the implicant is to satisfaction,
since all its atoms must hold. We consider the implicants in
decreasing order of $S(C)$. The score determines only this order and
never removes an implicant. 
\runin{D. Induced program:}
For the selected implicant $C$, we solve the \emph{induced program}
\begin{subequations}\label{eq:committed}
\begin{align}
\mathrm{NLP}(C):\quad
\min_{\mathbf{x},\mathbf{u}}\quad & J(\mathbf{x},\mathbf{u})\\*
\text{s.t.}\quad
& \eqref{eq:dyn},\quad u_k\in\mathcal{U},\\*
& \alpha_c(x_{k_c})\ge\delta,\quad \forall c\in C,
\end{align}
\end{subequations}
for a fixed implicant $C$, where $\delta>0$ is a satisfaction margin. The program directly
enforces all atoms of the selected implicant and contains no auxiliary
weights, bilinear terms, or $\min$/$\max$ operators. By
Theorem~\ref{thm:equiv}, any feasible solution of $\mathrm{NLP}(C)$
satisfies the original STL specification.
If $\mathrm{NLP}(C)$ does not provide a satisfying trajectory, the
next implicant in the ranking is considered. This process continues
until a satisfying trajectory is found or all implicants have been
examined. Identical implicants are considered only once, and
non-prime implicants are removed according to
Remark~\ref{rem:minimal}. Importantly, the ranking changes only the
order of the search. A feasible implicant remains in consideration even
if the reference trajectory assigns it a low score in the ranking.

\begin{figure*}[t!]
\centering
\includegraphics[width=0.9\textwidth]{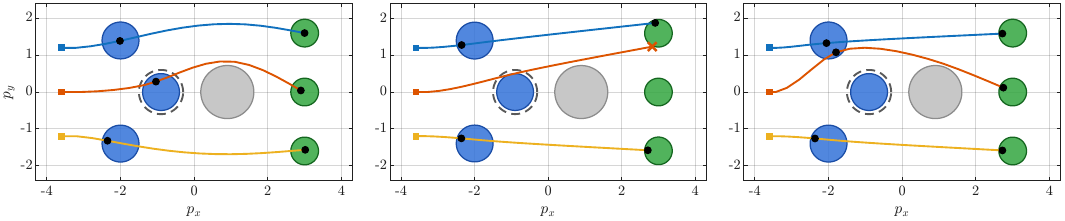}
\caption{The cold-started (left) and
warm-started (middle) exact reformulations fail, while the proposed
method (right) succeeds with $M=3$. Blue disks are the charging
stations, green disks the goals, the gray disk the obstacle, and the
dashed circle the clearance zone.}
\label{fig:traj}
\end{figure*}

\begin{algorithm}[t]
\caption{Ranked implicant search}
\label{alg:wgs}
\begin{algorithmic}[1]
\STATE Solve \eqref{eq:smoothprob} and obtain the reference trajectory
$\mathbf{x}^r$.
\STATE Generate implicants from $\mathcal W_0(\varphi)$ (as needed) in
decreasing order of the score \eqref{eq:score}.
\FOR{each generated implicant $C$ in this order}
\STATE Solve $\mathrm{NLP}(C)$ using $\mathbf{x}^r$ as an initial
    trajectory.
    \IF{$\mathrm{NLP}(C)$ returns a feasible point $(\mathbf{x},\mathbf{u})$ with $(\mathbf{x},0)\models\varphi$}
        \RETURN $(\mathbf{x},\mathbf{u})$.
    \ENDIF
\ENDFOR
\RETURN no satisfying trajectory found.
\end{algorithmic}
\end{algorithm}

\begin{example}[Example~\ref{ex:gamble}, revisited]
\label{ex:implicant}
For $\varphi=\Diamond_{[1,1]}(\pi_1\vee\pi_2)$, unrolling gives
$G_0(\varphi)=(\alpha_1(x_1)\ge0)\vee(\alpha_2(x_1)\ge0)$ and the two
implicants $C_1=\{\alpha_1(x_1)\ge0\}$ and $C_2=\{\alpha_2(x_1)\ge0\}$.
Suppose the surrogate problem \eqref{eq:smoothprob} gives
$x_1^r=0.7$. Then $S(C_1)=\alpha_1(0.7)=-1.44$ and
$S(C_2)=\alpha_2(0.7)=-2.64$, and the method first considers $C_1$.
However, $\mathrm{NLP}(C_1)$ is infeasible, since
$\alpha_1(u)\le-3/4$ for every $u\in[-1,1]$. The method then solves
$\mathrm{NLP}(C_2)$ and obtains $u^\star=-1+\sqrt{\tfrac14-\delta}$,
which approaches $-1/2$ as $\delta\to0$. Although the surrogate ranks
the infeasible implicant first, the search still finds $C_2$. The
surrogate affects only the order of the search.
\end{example}

\begin{theorem}[Guarantees]\label{prop:guar}
(i) \emph{Soundness:} every trajectory returned by
Algorithm~\ref{alg:wgs} is feasible for \eqref{eq:prob}.
(ii) \emph{Completeness relative to the NLP solver:} suppose
\eqref{eq:prob} is feasible and, whenever \eqref{eq:committed} is
feasible, the NLP solver finds a feasible point. Then
Algorithm~\ref{alg:wgs} returns a feasible solution of \eqref{eq:prob}.
\end{theorem}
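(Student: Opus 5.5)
For (i), soundness, I would trace the return statement of Algorithm~\ref{alg:wgs}. A pair $(\mathbf x,\mathbf u)$ is returned only if it is a feasible point of $\mathrm{NLP}(C)$ for some $C\in\mathcal W_0(\varphi)$. Feasibility of $\mathrm{NLP}(C)$ already gives \eqref{eq:dyn} and $u_k\in\mathcal U$. It also gives $\alpha_c(x_{k_c})\ge\delta$ for every $c\in C$, so $\min_{c\in C}\alpha_c(x_{k_c})\ge\delta$. By \eqref{eq:maxmin} in Theorem~\ref{thm:equiv}, $\rho^\varphi(\mathbf x,0)$ is the maximum over implicants of these minima, and is therefore at least $\delta$. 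Hence $(\mathbf x,\mathbf u)$ satisfies every constraint of \eqref{eq:prob}. The additional check $(\mathbf x,0)\models\varphi$ in the algorithm is then automatically passed, via Remark~\ref{rem:negation} and $\rho^\varphi\ge\delta>0$.

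For (ii), completeness, I would take a feasible point $(\mathbf x^\star,\mathbf u^\star)$ of \eqref{eq:prob}, so $\rho^\varphi(\mathbf x^\star,0)\ge\delta$. By the ``consequently'' part of Theorem~\ref{thm:equiv}, there is an implicant $C^\star\in\mathcal W_0(\varphi)$ with $\alpha_c(x^\star_{k_c})\ge\delta$ for all $c\in C^\star$. Thus $(\mathbf x^\star,\mathbf u^\star)$ is feasible for $\mathrm{NLP}(C^\star)$.

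The pruning step must not lose $C^\star$. If $C^\star$ is removed as non-prime, it contains some prime implicant $C'\subseteq C^\star$, which the search keeps. By Remark~\ref{rem:minimal}, $\mathrm{NLP}(C')$ has a feasible set containing that of $\mathrm{NLP}(C^\star)$, so it is also feasible. The same holds for deduplication. Existence of a prime $C'$ follows because $\mathcal W_0(\varphi)$ is a finite family: the unrolled formula has finitely many timed atoms, so every implicant contains a subset-minimal one.

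Since the ranking by \eqref{eq:score} only orders this finite family and never discards members, the loop either returns earlier, in which case part (i) shows the returned point is feasible, or reaches $C'$. At $C'$ the solver hypothesis gives a feasible point, and part (i) shows it satisfies $\varphi$, so the algorithm returns it.

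The main obstacle is this bookkeeping that no feasible implicant is lost, through pruning, deduplication, or lazy ``as needed'' generation. For lazy generation I would argue that the generator eventually enumerates all of the finite set $\mathcal W_0(\varphi)$ in score order. The hypothesis that the solver finds a feasible point whenever one exists is what removes all local-solver issues.
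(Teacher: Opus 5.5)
Your proof is correct and follows the paper's argument: soundness comes from feasibility of $\mathrm{NLP}(C)$ together with \eqref{eq:maxmin}, and completeness from the existence of an implicant $C^\star$ whose atoms all have margin $\delta$, which the search eventually reaches. Your extra step for pruned non-prime implicants, using Remark~\ref{rem:minimal}, tightens the paper's brief claim that ``all implicants are eventually considered.''
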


\begin{proof}
For (i), a feasible point of $\mathrm{NLP}(C)$ satisfies
$\alpha_c(x_{k_c})\ge\delta$ for all $c\in C$, hence
$\rho^\varphi(\mathbf x,0)\ge\delta>0$ by Theorem~\ref{thm:equiv} and
$(\mathbf x,0)\models\varphi$. Negated predicates hold strictly by
Remark~\ref{rem:negation}. For (ii), if \eqref{eq:prob} is feasible, then by
\eqref{eq:maxmin} there exists an implicant $C^\star$ whose atoms all
have margin at least $\delta$. Hence, $\mathrm{NLP}(C^\star)$ is
feasible. Since all implicants are eventually considered, the solver
then solves it.
\end{proof}
\begin{remark}[Scalability]\label{rem:scalability}
The number of implicants affects only the search effort, not the size
of the individual optimization problems. The reference trajectory is
computed once by solving \eqref{eq:smoothprob}, whose size is
independent of $|\mathcal W_0(\varphi)|$, scoring an implicant only
requires evaluating its atomic predicates along $\mathbf{x}^r$, and
each induced program $\mathrm{NLP}(C)$ contains the dynamics, the
input constraints, and the $|C|$ atomic constraints of that implicant.
Thus, a large implicant family may increase the number of candidates
considered, and in the worst case many implicants may still need to be
examined. The ranking places promising implicants early. The full
implicant family need not be constructed explicitly. For specifications
with separate disjunctive choices, the individual choices can be
stored in small ranked lists and combined only as needed.
\end{remark}
\section{Numerical Evaluation}
\label{sec:experiments}

We consider $M$ unicycle agents with positions $p^i_k\in\R^2$ and
headings $\theta^i_k\in\R$, evolving as
\begin{equation*}
p^i_{k+1}=p^i_k+\tau v^i_k
\begin{bmatrix}
\cos\theta^i_k\\
\sin\theta^i_k
\end{bmatrix},
\qquad
\theta^i_{k+1}=\theta^i_k+\tau\omega^i_k,
\end{equation*}
with $\tau=0.5$, $v^i_k\in[0,1]$, $|\omega^i_k|\le1.5$, and
$N=20$. The workspace contains three charging disks, $M$ goal disks,
and a circular obstacle, as shown in Figs.~\ref{fig:traj}
and~\ref{fig:swarm}. The cost is
$J=\sum_{i=1}^{M}\sum_{k=0}^{N-1}
((v^i_k)^2+(\omega^i_k)^2)$.
Each agent must remain outside the clearance zone until it registers
at one of the three charging stations, then reach one of the goals
within twelve steps, while avoiding the obstacle and maintaining
pairwise separation. This can be formally expressed as
\begin{align*}
\varphi=\;&\bigwedge_{i=1}^{M}
\Big[\pi^{i}_{\mathrm{out}}\,U_{[2,8]}
\Big(\bigvee_{w=1}^{3}\pi^{i}_{\mathrm{chg},w}
\wedge\Diamond_{[1,12]}\bigvee_{r=1}^{M}\pi^{i}_{\mathrm{goal},r}\Big)\Big]\\*
&\wedge\bigwedge_{i=1}^{M}\Box_{[0,N]}\pi^{i}_{\mathrm{obs}}
\wedge\!\!\bigwedge_{1\le i<j\le M}\!\!
\Box_{[0,N]}\pi^{ij}_{\mathrm{sep}} .
\end{align*}
All predicates are signed distances. The predicates
$\pi^{i}_{\mathrm{out}}$ and $\pi^{i}_{\mathrm{obs}}$ are positive
outside the clearance zone and the obstacle,
$\pi^{i}_{\mathrm{chg},w}$ and $\pi^{i}_{\mathrm{goal},r}$ inside
charging disk $w$ and goal disk $r$, and $\pi^{ij}_{\mathrm{sep}}$
when agents $i$ and $j$ are at least $d_{\mathrm s}=0.65$ apart. All
NLPs are solved with the solver IPOPT (Interior Point Optimizer)~\cite{wachter2006}, using $\delta=0.02$,
$\beta=8$, and $\gamma=0.05$. The implementation skips implicants
that are infeasible for geometric reasons, such as a charging station
inside the clearance zone or a goal that cannot be reached within its
time bounds, and places implicants in which two agents share a goal
disk behind the others. Neither step removes a feasible implicant. The exact reformulation of
Section~\ref{sec:exact} is tested with a cold start, a straight-line
trajectory with uniform weights, and a warm start, the surrogate
trajectory with every weight vector at the vertex that its margins
select. In the instance of Fig.~\ref{fig:traj}, one charging station lies inside the clearance zone. Since
$\pi^{i}_{\mathrm{out}}$ must hold up to and including the registration
step, no agent can register at this station. Every implicant
that selects this station is therefore infeasible. Both initial
trajectories nevertheless lead the middle agent toward it. As shown in
Fig.~\ref{fig:traj}, both exact reformulations fail for the
middle agent, whereas our proposed method succeeds.

We next increase the number of agents from $M=3$ to $M=8$, in the
benchmark workspace shown in Fig.~\ref{fig:swarm}. For these instances,
we compare the proposed method with a mixed-integer formulation that
introduces binary variables only at the disjunctions.
This nonlinear mixed-integer problem is solved
with the solver SCIP (Solving Constraint Integer
Programs)~\cite{hojny2025scip} until its first satisfying solution,
with a time limit of $10^4$\,s and 10 random seeds per instance. The
proposed method is deterministic, and its computation time is averaged
over three repetitions.
Table~\ref{tab:runtime_cost} shows that SCIP succeeds in every run only
for $M\le4$ and fails in at least 7 of 10 runs for $M\ge6$. With
seed 0, a time limit of $2\times10^4$\,s still yields no satisfying trajectory
for $M=8$. Where SCIP succeeds, its mean computation time is $8$ to $86$
times that of the proposed method, and each of its first satisfying
solutions costs more than the locally optimal solution of the
induced NLP. Fig.~\ref{fig:swarm} shows the proposed solution for
$M=8$.

The ranking is heuristic, and the implicant it places first need
not be feasible. To see how often this happens, the instance with $M=8$
is perturbed 10 times by moving every start and goal position by up to
$0.2$ in each coordinate, with the formula $\varphi$ unchanged.
Table~\ref{tab:ranking} lists, for each instance, the number of induced
NLPs solved until the first satisfying trajectory is found. In 5 of the 10
instances the first induced NLP is infeasible, and the search returns a
satisfying trajectory after at most 6 NLPs and within $223$\,s.

\begin{table}[t]
\centering
\caption{Time and cost. SCIP: satisfying runs of 10 within
$10^4$\,s, and mean (min--max) over them.}
\label{tab:runtime_cost}
\footnotesize
\setlength{\tabcolsep}{3pt}
\begin{tabular}{c rr c rr}
\toprule
& \multicolumn{2}{c}{Proposed}
& \multicolumn{3}{c}{SCIP} \\
\cmidrule(lr){2-3}\cmidrule(lr){4-6}
$M$ & Time [s] & Cost & Feasible & Time [s] & Cost \\
\midrule
3 & 16.9  & 33.2  & 10/10 & 132 (39--314)    & 47.8 (34.3--77.5)  \\
4 & 13.1  & 43.7  & 10/10 & 563 (106--1903)  & 57.9 (50.1--63.4)  \\
5 & 64.6  & 55.2  & 9/10  & 1301 (211--3311) & 72.7 (69.4--80.1)  \\
6 & 42.2  & 68.0  & 3/10  & 2098 (750--3567) & 94.3 (80.5--108.0) \\
7 & 75.8  & 86.0  & 2/10  & 6492 (5970--7014) & 94.9 (94.6--95.3) \\
8 & 134.5 & 112.2 & 1/10  & 5046             & 135.9              \\
\bottomrule
\end{tabular}
\end{table}

\begin{table}[t]
\centering
\caption{10 perturbed instances with $M=8$: induced NLPs solved
until the first satisfying trajectory, and computation time.}
\label{tab:ranking}
\footnotesize
\setlength{\tabcolsep}{3.2pt}
\begin{tabular}{l cccccccccc}
\toprule
Instance & 1 & 2 & 3 & 4 & 5 & 6 & 7 & 8 & 9 & 10 \\
\midrule
NLPs     & 6 & 4 & 1 & 5 & 1 & 3 & 1 & 1 & 1 & 4 \\
Time [s] & 132 & 90 & 66 & 116 & 68 & 101 & 75 & 95 & 157 & 223 \\
\bottomrule
\end{tabular}
\end{table}

\begin{figure}[t!]
\centering
\includegraphics[width=0.7\columnwidth]{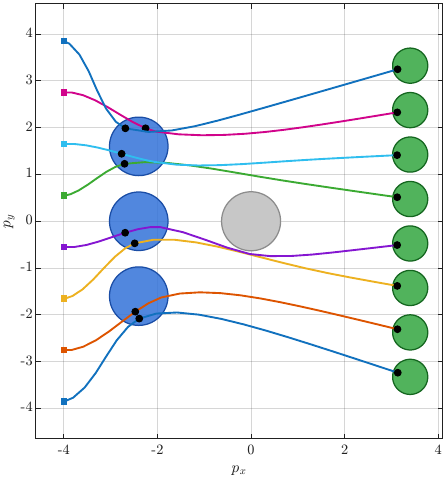}
\caption{Proposed method for the benchmark instance with $M=8$. Colors
as in Fig.~\ref{fig:traj}.}
\label{fig:swarm}
\end{figure}

\section{Conclusion}
\label{sec:con}

We propose an integer-free synthesis method for finite-horizon STL
specifications that makes the
discrete choices of the specification explicit and ranks them using a
smooth reference trajectory. Each selected implicant is solved through
a standard NLP without simplex weights or bilinear terms. The method
is sound and complete relative to the NLP solver, succeeds where the
exact continuous reformulation fails, and scales to instances for
which the mixed-integer formulation reaches its time limit. Possible
extensions include global optimality across implicants, using
cost lower bounds to certify the best solution found, and improved
ranking through additional reference trajectories that explore
alternative disjunctive choices of the specification.

\bibliographystyle{IEEEtran}
\bibliography{integer}

@inproceedings{maler2004,
  author    = {Maler, Oded and Ni{\v{c}}kovi{\'c}, Dejan},
  title     = {Monitoring temporal properties of continuous signals},
  booktitle = {Formal Techniques, Modelling and Analysis of Timed and
               Fault-Tolerant Systems (FORMATS/FTRTFT)},
  series    = {LNCS},
  volume    = {3253},
  pages     = {152--166},
  publisher = {Springer},
  year      = {2004}
}

@inproceedings{donze2010,
  author    = {Donz{\'e}, Alexandre and Maler, Oded},
  title     = {Robust satisfaction of temporal logic over real-valued
               signals},
  booktitle = {Formal Modeling and Analysis of Timed Systems (FORMATS)},
  series    = {LNCS},
  volume    = {6246},
  pages     = {92--106},
  publisher = {Springer},
  year      = {2010}
}

@inproceedings{raman2014,
  author    = {Raman, Vasumathi and Donz{\'e}, Alexandre and Maasoumy,
               Mehdi and Murray, Richard M. and
               Sangiovanni-Vincentelli, Alberto and Seshia, Sanjit A.},
  title     = {Model predictive control with signal temporal logic
               specifications},
  booktitle = {53rd IEEE Conference on Decision and Control (CDC)},
  pages     = {81--87},
  year      = {2014}
}

@article{kurtz2022,
  author  = {Kurtz, Vince and Lin, Hai},
  title   = {Mixed-integer programming for signal temporal logic with
             fewer binary variables},
  journal = {IEEE Control Systems Letters},
  volume  = {6},
  pages   = {2635--2640},
  year    = {2022}
}

@article{kurtz2021ddp,
  author  = {Kurtz, Vince and Lin, Hai},
  title   = {Trajectory optimization for high-dimensional nonlinear
             systems under {STL} specifications},
  journal = {IEEE Control Systems Letters},
  volume  = {5},
  number  = {4},
  pages   = {1429--1434},
  year    = {2021}
}

@inproceedings{ferrere2015,
  author    = {Ferr{\`e}re, Thomas and Maler, Oded and Ni{\v{c}}kovi{\'c}, Dejan},
  title     = {Trace diagnostics using temporal implicants},
  booktitle = {Automated Technology for Verification and Analysis (ATVA)},
  series    = {LNCS},
  volume    = {9364},
  pages     = {241--258},
  year      = {2015}
}

@inproceedings{sadraddini2015,
  author    = {Sadraddini, Sadra and Belta, Calin},
  title     = {Robust temporal logic model predictive control},
  booktitle = {53rd Annual Allerton Conference on Communication, Control,
               and Computing (Allerton)},
  pages     = {772--779},
  year      = {2015}
}

@inproceedings{ho2022automaton,
  author    = {Ho, Qi Heng and Ilyes, Roland B. and Sunberg, Zachary N.
               and Lahijanian, Morteza},
  title     = {Automaton-guided control synthesis for signal temporal
               logic specifications},
  booktitle = {61st IEEE Conference on Decision and Control (CDC)},
  pages     = {3243--3249},
  year      = {2022}
}

@article{takayama2025stlccp,
  author  = {Takayama, Yoshinari and Hashimoto, Kazumune and Ohtsuka,
             Toshiyuki},
  title   = {{STLCCP}: Efficient convex optimization-based framework for
             signal temporal logic specifications},
  journal = {IEEE Transactions on Automatic Control},
  volume  = {70},
  number  = {9},
  pages   = {6064--6079},
  year    = {2025}
}

@misc{wehbeh2026logic,
  author        = {Wehbeh, Jad and Kerrigan, Eric C.},
  title         = {Exact continuous reformulations of logic constraints in
                   nonlinear optimization and optimal control problems},
  howpublished  = {arXiv:2601.03906},
  year          = {2026}
}

@misc{decarli2026stlgcs,
  author        = {De Carli, Nicola and Marchesini, Gregorio and
                   Dimarogonas, Dimos V.},
  title         = {{STL-GCS}: A planner-controller framework for signal
                   temporal logic via graphs of time-varying convex sets},
  howpublished  = {arXiv:2607.19196},
  year          = {2026}
}

@misc{ye2026zeroshot,
  author        = {Ye, Bowen and Hou, Ancheng and Huang, Junyue and Liu,
                   Ruijia and Yin, Xiang},
  title         = {Zero-shot signal temporal logic planning with
                   disjunctive branch selection in dynamic semantic maps},
  howpublished  = {arXiv:2605.01222},
  year          = {2026}
}

@article{gilpin2021,
  author  = {Gilpin, Yann and Kurtz, Vince and Lin, Hai},
  title   = {A smooth robustness measure of signal temporal logic for
             symbolic control},
  journal = {IEEE Control Systems Letters},
  volume  = {5},
  number  = {1},
  pages   = {241--246},
  year    = {2021}
}

@inproceedings{pant2017,
  author    = {Pant, Yash Vardhan and Abbas, Houssam and Mangharam, Rahul},
  title     = {Smooth operator: Control using the smooth robustness of
               temporal logic},
  booktitle = {IEEE Conference on Control Technology and Applications
               (CCTA)},
  pages     = {1235--1240},
  year      = {2017}
}

@misc{han2025exact,
  author        = {Han, Shaohang and Verhagen, Joris and Tumova, Jana},
  title         = {Exact smooth reformulations for trajectory
                   optimization under signal temporal logic
                   specifications},
  howpublished  = {arXiv:2511.07375},
  year          = {2025}
}

@book{wolsey2020,
  author    = {Wolsey, Laurence A.},
  title     = {Integer Programming},
  edition   = {2nd},
  publisher = {Wiley},
  year      = {2020}
}

@inproceedings{toussaint2015,
  author    = {Toussaint, Marc},
  title     = {Logic-geometric programming: An optimization-based
               approach to combined task and motion planning},
  booktitle = {International Joint Conference on Artificial Intelligence
               (IJCAI)},
  pages     = {1930--1936},
  year      = {2015}
}

@article{wachter2006,
  author  = {W{\"a}chter, Andreas and Biegler, Lorenz T.},
  title   = {On the implementation of an interior-point filter
             line-search algorithm for large-scale nonlinear
             programming},
  journal = {Mathematical Programming},
  volume  = {106},
  number  = {1},
  pages   = {25--57},
  year    = {2006}
}

@inproceedings{biere1999,
  author={Biere, Armin and Cimatti, Alessandro and Clarke, Edmund and Zhu, Yunshan},
  title={Symbolic model checking without {BDD}s},
  booktitle={Tools and Algorithms for the Construction and Analysis of Systems (TACAS)},
  series={LNCS}, volume={1579}, pages={193--207}, year={1999}
}

@inproceedings{kordabad2024dr,
  author={Bahari Kordabad, Arash and Vlahakis, Eleftherios E. and Lindemann, Lars
          and Dimarogonas, Dimos V. and Soudjani, Sadegh},
  title={Distributionally Robust Control for Chance-Constrained
         Signal Temporal Logic Specifications},
  booktitle={63rd IEEE Conference on Decision and Control (CDC)},
  year={2024},
  pages={1593--1598}
}

@inproceedings{kordabad2024cbf,
  author    = {Bahari Kordabad, Arash and Charitidou, Maria and
               Dimarogonas, Dimos V. and Soudjani, Sadegh},
  title     = {Control barrier functions for stochastic systems under
               signal temporal logic tasks},
  booktitle = {European Control Conference (ECC)},
  pages     = {3213--3219},
  year      = {2024}
}

@book{nemhauser1988,
  title={Integer and Combinatorial Optimization},
  author={Nemhauser, George L. and Wolsey, Laurence A.},
  year={1988},
  publisher={Wiley}
}

@techreport{hojny2025scip,
  author={C. Hojny and M. Besan{\c{c}}on and others},
  title={The {SCIP} Optimization Suite 10.0},
  institution={Optimization Online},
  year={2025},
  note={arXiv:2511.18580}
}

@article{burer2012non,
  title={Non-convex mixed-integer nonlinear programming: A survey},
  author={Burer, Samuel and Letchford, Adam N},
  journal={Surveys in Operations Research and Management Science},
  volume={17},
  number={2},
  pages={97--106},
  year={2012},
  publisher={Elsevier}
}

@article{kordabad2025ddr,
  author  = {Bahari Kordabad, Arash and Vlahakis, Eleftherios E. and
             Lindemann, Lars and Gros, Sebastien and Dimarogonas,
             Dimos V. and Soudjani, Sadegh},
  title   = {Data-driven distributionally robust control for interacting
             agents under logical constraints},
  journal = {arXiv:2503.09816},
  year    = {2025}
}
\end{document}